\documentclass[pdflatex,sn-basic]{sn-jnl}% Basic Springer Nature Reference Style/Chemistry Reference Style
\usepackage{graphicx}%
\usepackage{multirow}%
\usepackage{amssymb}
\usepackage{amsmath,amsthm,amscd,amssymb,amsfonts}
\usepackage[title]{appendix}%
\usepackage{xcolor}%
\usepackage{textcomp}%
\usepackage{manyfoot}%
\usepackage{booktabs}%
\usepackage{algorithm}%
\usepackage{algorithmicx}%
\usepackage{algpseudocode}%
\usepackage{listings}%
\usepackage{graphics}%
\theoremstyle{thmstyletwo}%

\theoremstyle{thmstylethree}%

\numberwithin{equation}{section}
\newtheorem{thm}{Theorem}

\newtheorem{cor}{Corollary}

\begin{document}

\title[Extended astrophysical thermonuclear functions]{Closed-form representations of extended astrophysical thermonuclear functions}

%%=============================================================%%
%% GivenName	-> \fnm{Joergen W.}
%% Particle	-> \spfx{van der} -> surname prefix
%% FamilyName	-> \sur{Ploeg}
%% Suffix	-> \sfx{IV}
%% \author*[1,2]{\fnm{Joergen W.} \spfx{van der} \sur{Ploeg}
%%  \sfx{IV}}\email{iauthor@gmail.com}
%%=============================================================%%

\author[1]{\fnm{Subrat} \sur{Parida}}\email{paridasubrat553@gmail.com}\equalcont{Currently Looking for Doctoral Research.}

\affil[1]{\orgdiv{Department of Mathematics}, \orgname{Ramanujan School of Mathematical Sciences, Pondicherry University}, \orgaddress{\city{Pondicherry}, \postcode{605014}, \state{Puducherry}, \country{India}}}

%%==================================%%
%% Sample for unstructured abstract %%
%%==================================%%

\abstract{Reaction rate determination is an essential problem in the theories of cosmological and stellar nucleosynthesis. I investigate the closed-form analytic evaluation of thermonuclear reaction rates, which are necessary for both cosmic and celestial nucleosynthesis. An alternate velocity distribution to the Maxwell-Boltzmann distribution can be considered in nuclear fusion processes occurring in the interior of stars when there is a diversion from hydrostatic equilibrium. The extension of non-resonant reaction rate integrals in standard, cut-off, depletion, and screening instances is examined in this study, utilising the MacDonald distribution. This study is primarily on the use of generalized special functions that represent the extended non-resonant thermonuclear functions in mathematical physics. Moreover, an attempt is made to analyse the reaction probability integrals corresponding to the several forms of the slowly varying cross-section factor S(E).}

\keywords{Astrophysical thermonuclear functions, Reaction rate probability integral, Maxwell-Boltzmann distribution, MacDonald distribution, MacDonald function, Fox's $H$-function, Mellin transform}

\pacs[MSC Classification]{82-10, 33C60, 33E20, 33E30, 44A05, 44A20, 60B99, 60E05, 82C05}

\maketitle

\section{Introduction and preliminaries}
Every facet of the chemical development of the cosmos, or at least its contents, stars and galaxies, is known to be governed by nuclear reactions (\cite{Bethe}). As the mobility of the nuclei is believed to be in thermal equilibrium, \cite{Atk-Hou} postulated that the thermonuclear reactions occurring close to the cores of stars are the source of the energy produced by them. Because of the Coulomb repulsion between nuclei, only the lightest chemical elements could potentially contribute to the state of hot star plasmas. The average of the Gamow penetration factor discovered by \cite{Gam-Tell}, throughout the Maxwell-Boltzmann velocity distribution, determines the energy generation rate (\cite{Fowler}).

Hot stellar fusion plasmas are typically explained using Boltzmann-Gibbs statistical mechanics studied by \cite{SMH}, which is based on the entropy $S_{BG} = -k\, \sum_{i=1}^{W} \, p_{i} \, \ln{p_{i}}$.  The pace at which the chemical composition of hot plasmas changes is described by the coefficient known as the thermonuclear reaction rate. \cite{Critch}, \cite{anderson1994astrophysical}, \cite{HM}, and \cite{Chau-Hau-Mat} all conducted analytical studies of the closed-form representations of thermonuclear reaction rates. It is demonstrated that the reaction rate comprises the astrophysical cross-section factor and its derivatives, which need to be ascertained through experimentation, as well as a significant portion of the thermonuclear reaction rate that is unrelated to experimental outcomes and can be handled using closed-form representation methods involving generalised hypergeometric functions.

Nuclear and neutrino astrophysics are dynamic domains where several experiments gather data to investigate how energy is produced in stars, especially the Sun, based on neutrino observations (\cite{Davis}). Refining analytical representations for nuclear reaction rates is achievable by increasing theoretical understanding and validation through experiments with nuclear cross-sections. The thermonuclear function hypothesis is now being developed further by \cite{Ber-Hall-San}. Thermonuclear fusion reactions are fundamentally responsible for the energy emitted by stellar objects like the Sun.

Assuming that the nuclear velocity distribution is Maxwell-Boltzmannian, collision theory calculates the number of intense collisions that produce nuclear reactions studied by \cite{Fowler}. The general notion of a reaction network is an association of species described by kinetic equations and coupled by a specific set of reactions. Given a certain kind of reaction between two atomic nuclei, $m$ and $n$, the number of reactions per unit of time and volume is equal to the product of the reaction probability ${\langle \sigma v \rangle}$ between the particles and the particle densities $N_{m} N_{n}$, which is defined by
\begin{equation}\label{r1}
  r_{m n} = (1 - \frac{1}{2} \delta_{m n}) N_{m} N_{n} {\langle \sigma v \rangle},
\end{equation}
where $\delta_{m n}$ is the Kronecker symbol and ${\langle \sigma v \rangle}$ denotes the reaction probability, which is the average of the reaction cross section across the normalised distribution function of the particle velocities as follows:
\begin{equation}\label{r2}
  {\langle \sigma v \rangle} = \int_{0}^{\infty} d^{3}v \, f(v) \, \sigma(v) v = \int_{0}^{\infty} dE \sigma(E) \left(\frac{2E}{M}\right)^{1/2} f(E), \quad [{\langle \sigma v \rangle}] = cm^{3} s^{-1}.
\end{equation}
In this instance, $d^{3} v = 4 \pi v^{2} dv$; the kinetic energy of the particles in the center-of-mass system is $E = Mv^{2}/2$, the decreased mass of the particles is $M$, and the reaction cross section is $\sigma(v)$ and $\sigma(E)$, respectively, defined by \cite{MH1}. A geometrical component, $\pi \lambda^{2} \propto E^{-1}$, is always proportional to the quantum-mechanical interaction between two particles, where $\lambda$ is the de Broglie wavelength. Therefore, for nonresonant nuclear reactions involving nuclei with charges $Z_{m}$ and $Z_{n}$ at low energies (i.e., below the Coulomb barrier) (see \cite{anderson1994astrophysical, Chau-Hau-Mat, Fowler, Hau-Joh, HM}), the nonresonant nuclear cross section (in the center-of-mass frame) takes the following form:
\begin{equation}\label{r3}
  \sigma(E) = \frac{S(E)}{E} e^{-2 \pi \eta(E)}, \qquad \qquad \eta(E) = \left(\frac{M}{2}\right)^{\frac{1}{2}} \frac{Z_{m} Z_{n} \, q^{2}}{\hbar E^{1/2}},
\end{equation}
where $e^{-2 \pi \eta(E)}$ is the Gamow penetration factor with the Sommerfeld parameter $\eta(E)$, $q$ is the quantum electric charge, and $\hbar$ is Planck's quantum of action. Equation \eqref{r3}, which incorporates the S-factor studied by \cite{Fowler}, makes it easier to convert observed reaction cross-sections into astrophysical energies. A slowly fluctuating function of the center-of-momentum energy defined by \cite{HM} and inherently nuclear components of the probability of a nuclear reaction studied by \cite{Chau-Hau-Mat}, are indicated by the cross-section factor $S(E)$ in \eqref{r3}, a residual energy function. It may be expressed conveniently as a power series expansion, where
\begin{equation}\label{r4}
  S(E) \approx S(0) + S^{\prime} (0) E + \frac{1}{2} S^{\prime\prime} (0) E^{2} = \sum_{\upsilon = 0}^{2} \frac{S^{(\upsilon)} (0)}{\upsilon !} E^{\upsilon}.
\end{equation}
The value of $S(E)$ at zero energy is represented by $S(0)$, whereas the first and second derivatives of $S(E)$ with respect to energy evaluated at $E = 0,$ are $S^{\prime}(0)$ and $S^{\prime\prime}(0)$, respectively.

This paper proposes the introduction of a new non-resonant thermonuclear reaction rate of the particles in the center-of-mass system by making use of the relative velocity and kinetic energy distribution function that involves the Bessel function of the third kind (i.e, MacDonald function). We then study the thermonuclear collision probability integral, called the astrophysical thermonuclear functions, and extend by using the MacDonald function. Further, we evaluate the Mellin transform and Fox $H$-function illustration of the extended thermonuclear functions, which present the closed-form representation of the extended thermonuclear rate of reaction.

\section{Non-Resonant Thermonuclear Rate of Reactions}
In this section, we investigated the Maxwell-Boltzmannian and Mittag-Leffler cases of relative velocity and kinetic energy distribution function. Following this, we studied the Maxwell-Boltzmannian and Mittag-Leffler rate of reaction. Further, we proposed the MacDonald rate of reaction by making use of the MacDonald case of relative velocity and kinetic energy distribution function.

\subsection{Maxwell-Boltzmannian Thermonuclear Reaction}
In a non-degenerate and non-relativistic gas, the relative velocities of the nuclei have a Maxwell-Boltzmannian distribution function (\cite{Hau-Joh}), where
\begin{equation}\label{r5}
  f_{MBD}(v) dv = \left(\frac{M}{2 \pi K T} \right)^{\frac{3}{2}} \, e^{-\frac{M v^{2}}{2 K T}} 4 \pi v^{2} \, dv,
\end{equation}
where $T$ is the absolute temperature and $K$ is the Boltzmann constant. The Maxwell-Boltzmannian relative kinetic energy spectrum for \eqref{r3} through considering a non-degenerate, non-relativistic gas of particles (\cite{MH1}) is
\begin{equation}\label{r6}
  f_{MBD} (E) dE = 2\pi \left(\frac{1}{\pi K T}\right)^{\frac{3}{2}} \, e^{-\frac{E}{K T}} \sqrt{E} dE,
\end{equation}
where $E = \frac{M v^{2}}{2}.$ If the distribution function $\varepsilon f (v, t)$ describes the state of the plasma at time $t,$ where $\varepsilon$ is the constant particle number density, $V$ is the velocity variable, and $v = |V|$, then the conservation of mass and energy imply that
\begin{align*}
  &\int d^{3}v f(v, t) = 1  \\
  & \int d^{3}v v^{2} f(v, t) = \frac{3 K T}{M}.
\end{align*}
As $t \rightarrow \infty, f(v, t)$ tends to the Maxwell-Boltzmann distribution function given in \eqref{r5}, this is the case in a gravitationally stabilised star fusion reactor. Using the kinetic energy distribution function from \eqref{r6} in \eqref{r2}, the Maxwell-Boltzmannian reaction probability we have
\begin{equation}\label{r7}
  {\langle \sigma v \rangle} = \left(\frac{8}{\pi M}\right)^{\frac{1}{2}} \left(\frac{1}{K T}\right)^{\frac{3}{2}} \int_{0}^{\infty} E \sigma(E) e^{-\frac{E}{K T}} dE.
\end{equation}

Substituting \eqref{r3} following \eqref{r4} in \eqref{r7} and then using that in \eqref{r1}, the non-resonant Maxwell-Boltzmannian thermonuclear rate of reaction we have
\begin{align}\label{r8}
& r_{m n} = (1 - \frac{1}{2} \delta_{m n}) N_{m} N_{n} \left(\frac{8}{\pi M}\right)^{\frac{1}{2}} \left(\frac{1}{K T}\right)^{\frac{3}{2}} \notag\\
&\qquad \qquad \qquad \times \sum_{\upsilon = 0}^{2} \frac{S^{(\upsilon)} (0)}{\upsilon !} \left\{\int_{0}^{\infty} E^{\upsilon} e^{-\frac{E}{K T}} e^{-\frac{\tau}{\sqrt{E}}} \, dE\right\},
\end{align}
where
\begin{equation}\label{r9}
\tau = 2\pi \left(\frac{M}{2}\right)^{\frac{1}{2}} \frac{Z_{m} Z_{n} \, q^{2}}{\hbar}.
\end{equation}

\subsection{Mittag-Leffler Thermonuclear Reaction}
Recently, \cite{Hau-Kab-Kum} by making use of the two-parameter Mittag-Leffler function, presented the Mittag-Leffler velocity distribution as
\begin{equation}\label{r10}
  f_{ML} (v) dv = C\, {\mathbb{E}}_{\beta, \beta} \left(-\frac{M v^{2}}{2 K T}\right) 4 \pi v^{2} dv \quad\qquad 0 < \beta \leq 1,
\end{equation}
where $C$ is the normalizing constant and ${\mathbb{E}}_{\beta, \beta} (\cdot)$ is the Mittag-Leffler function (see, \cite{Gor-Kil-Main-Rog, DLMF}). The Mittag-Leffler energy distribution, which measures relative kinetic energy, is
\begin{equation}\label{r11}
  f_{ML} (E) dE = \frac{2 \sqrt{\pi}}{\beta} \Gamma\left(1 - \frac{\beta}{2}\right) \left(\frac{1}{\pi K T}\right)^{\frac{3}{2}} {\mathbb{E}}_{\beta, \beta} \left(-\frac{E}{K T}\right) \sqrt{E} dE,
\end{equation}
where $0 < \beta \leq 1.$ Then, using \eqref{r11} in \eqref{r2} followed by \eqref{r1}, the modified form, called the Mittag-Leffler reaction rate probability integral, we have
\begin{align}\label{r12}
^{ML}\bar{r}_{m n} & = (1 - \frac{1}{2} \delta_{m n}) N_{m} N_{n} \left(\frac{8 \pi}{M}\right)^{\frac{1}{2}} \frac{\Gamma\left(1 - \frac{\beta}{2}\right)}{\beta} \left(\frac{1}{\pi K T}\right)^{\frac{3}{2}} \notag\\ &\qquad \qquad  \times \sum_{\upsilon = 0}^{2} \frac{S^{(\upsilon)} (0)}{\upsilon !} \left\{\int_{0}^{\infty} E^{\upsilon} {\mathbb{E}}\left({-\frac{E}{K T}}\right) e^{-\frac{\tau}{\sqrt{E}}} \, dE\right\},
\end{align}
where $\tau$ is given in \eqref{r9}.

\subsection{MacDonald Thermonuclear Reaction}
The MacDonald velocity distribution function, which we proposed as
\begin{equation}\label{m1}
  f_{MD}(v) dv = \left(\frac{2}{\pi}\right)^{\frac{1}{2}} \Gamma\left(\frac{1 - \alpha}{2}\right) \left(\frac{M}{2 \pi K T}\right)^{2} \, \mathbb{K}_{\alpha + \frac{1}{2}} \left(\frac{M v^{2}}{2 K T}\right) 4 \pi v^{3} dv,
\end{equation}
\begin{equation*}
  (0 < \alpha \leq 1).
\end{equation*}
Considering the relative kinetic energy, $E = \frac{M v^{2}}{2}$ in the above expression, the MacDonald energy distribution function, we obtain
\begin{equation}\label{m2}
  f_{MD}(E) dE = 2\sqrt{2 \pi}\, \Gamma\left(\frac{1 - \alpha}{2}\right) \left(\frac{1}{\pi K T}\right)^{2} \, \mathbb{K}_{\alpha + \frac{1}{2}} \left(\frac{E}{K T}\right)\, E\,dE,
\end{equation}
where $0 < \alpha \leq 1$ and $\mathbb{K}_{\alpha} (\cdot)$ is the MacDonald function (\cite{Wats}) (i.e, Bessel function of third kind of order $\alpha$).
As \begin{equation*}
  \mathbb{K}_{\frac{1}{2}} (x) = \sqrt{\frac{\pi}{2 x}} e^{-x},
\end{equation*}
then for $\alpha = 0$ in \eqref{m1} and \eqref{m2} reduce to the Maxwell-Boltzmannian relative distribution function given in \eqref{r5} and \eqref{r6}, respectively. Now substituting \eqref{m2} in \eqref{r2} in accordance with \eqref{r3} and \eqref{r4} and then putting in \eqref{r1}, the rate reaction probability integral we obtain
\begin{align}\label{m3}
^{MD}\hat{r}_{m n} & = (1 - \frac{1}{2} \delta_{m n}) N_{m} N_{n} \left(\frac{2}{M}\right)^{\frac{1}{2}} \left(\frac{2}{\pi}\right)^{\frac{3}{2}} \Gamma\left(\frac{1 - \alpha}{2}\right) \left(\frac{1}{K T}\right)^{2} \notag\\ &\qquad \qquad  \times \sum_{\upsilon = 0}^{2} \frac{S^{(\upsilon)} (0)}{\upsilon !} \left\{\int_{0}^{\infty} E^{\upsilon + \frac{1}{2}}\, \mathbb{K}_{\alpha + \frac{1}{2}} \left(\frac{E}{K T}\right)\, e^{-\frac{\tau}{\sqrt{E}}} \, dE\right\},
\end{align}
where $\tau$ is given in \eqref{r9}. The transformation of $y = \frac{E}{K T}$ and the dimensionless positive real variable in physical case, $x = \frac{\tau}{\sqrt{K T}}$ in \eqref{m3} yields
\begin{align}\label{m4}
  ^{MD}\hat{r}_{m n} & = (1 - \frac{1}{2} \delta_{m n}) N_{m} N_{n} \left(\frac{2}{M}\right)^{\frac{1}{2}} \left(\frac{2}{\pi}\right)^{\frac{3}{2}} \Gamma\left(\frac{1 - \alpha}{2}\right) \notag\\ &\qquad \quad  \times \sum_{\upsilon = 0}^{2} \left(\frac{1}{K T}\right)^{-\upsilon + \frac{1}{2}} \frac{S^{(\upsilon)} (0)}{\upsilon !} \left\{\int_{0}^{\infty} y^{\upsilon + \frac{1}{2}}\, \mathbb{K}_{\alpha + \frac{1}{2}}(y)\, e^{-x y^{\frac{-1}{2}}} \, dE\right\}.
\end{align}
The modified form of the thermonuclear reaction defined in \eqref{m4} shall be known as \textit{MacDonald extended thermonuclear rate of reaction} (MacDonald rate reaction probability integral).

Here we present the graphs of the MacDonald energy distribution function in \eqref{m2}. Figure~\ref{fig.1} illustrates the graph of $f_{MD}(E)$ for various $KT$ values at a fixed value of $\alpha = 0.20,$ while Figure~\ref{fig.2} illustrates the graph of $f_{MD}(E)$ for various $\alpha$ values, with fixed values of $KT = 400.$ Figure~\ref{fig.3} highlights the $3$D graph of the MacDonald energy distribution for a fixed value of $\alpha = 0.20$ through $KT$ values ranging from $200$ to $400$, and Figure~\ref{fig.4} demonstrates the $3$D graph of the MacDonald energy distribution for various values of $\alpha$ ranging from $0.04$ to $0.36,$ with $KT = 400.$
\begin{figure}[htbp]
  \centering
  % Left Independent Figure
  \begin{minipage}[b]{0.45\textwidth}
    \centering
    \includegraphics[width=\textwidth]{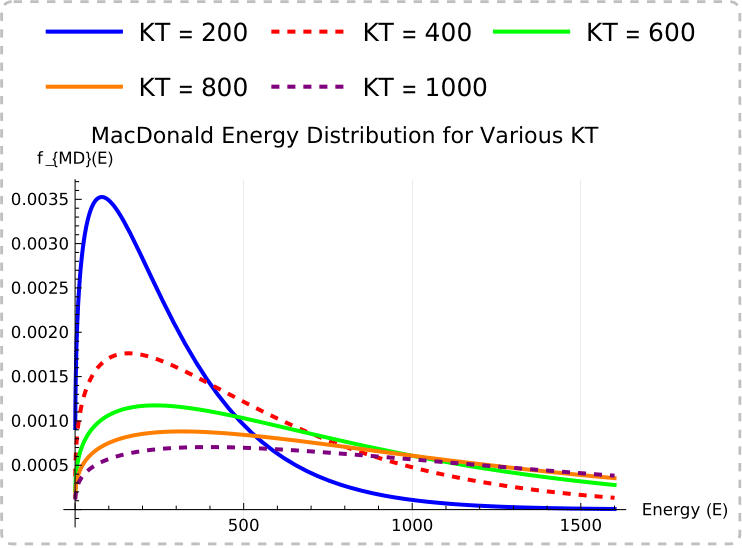}
    \caption{$f_{MD}(E)$  for $\alpha$ = 0.20 and KT = 200, 400, 600, 800, 1000.}
    \label{fig.1}
  \end{minipage}
  \hfill
  % Right Independent Figure
  \begin{minipage}[b]{0.45\textwidth}
    \centering
    \includegraphics[width=\textwidth]{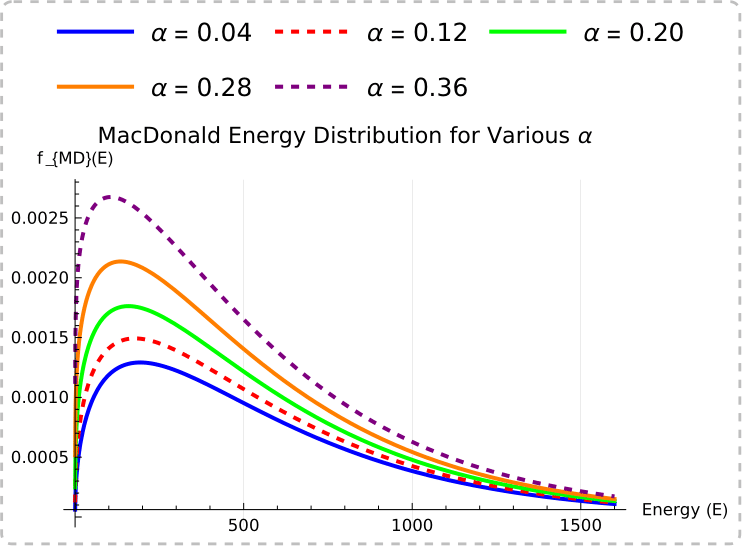}
    \caption{$f_{MD}(E)$  for KT = 400 and $\alpha$ = 0.04, 0.12, 0.20, 0.28, 0.36.}
    \label{fig.2}
  \end{minipage}
\end{figure}
\begin{figure}[htbp]
  \centering
  % Left Independent Figure
  \begin{minipage}[b]{0.45\textwidth}
    \centering
    \includegraphics[width=\textwidth]{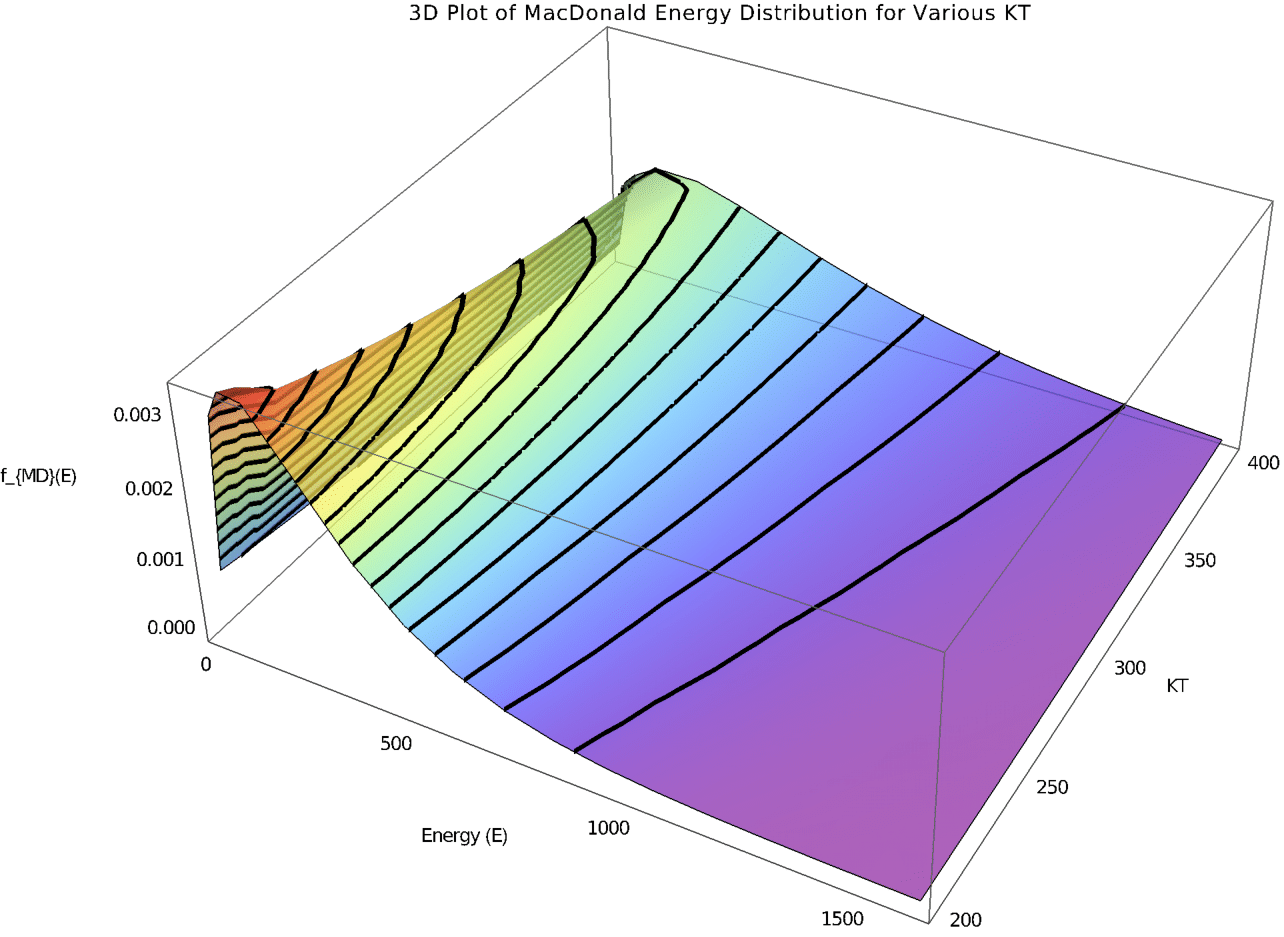}
    \caption{$f_{MD}(E)$ ($3$D) for a fixed $\alpha$ = 0.20 and KT = \{200, \dots, 400\}}
    \label{fig.3}
  \end{minipage}
  \hfill
  % Right Independent Figure
  \begin{minipage}[b]{0.45\textwidth}
    \centering
    \includegraphics[width=\textwidth]{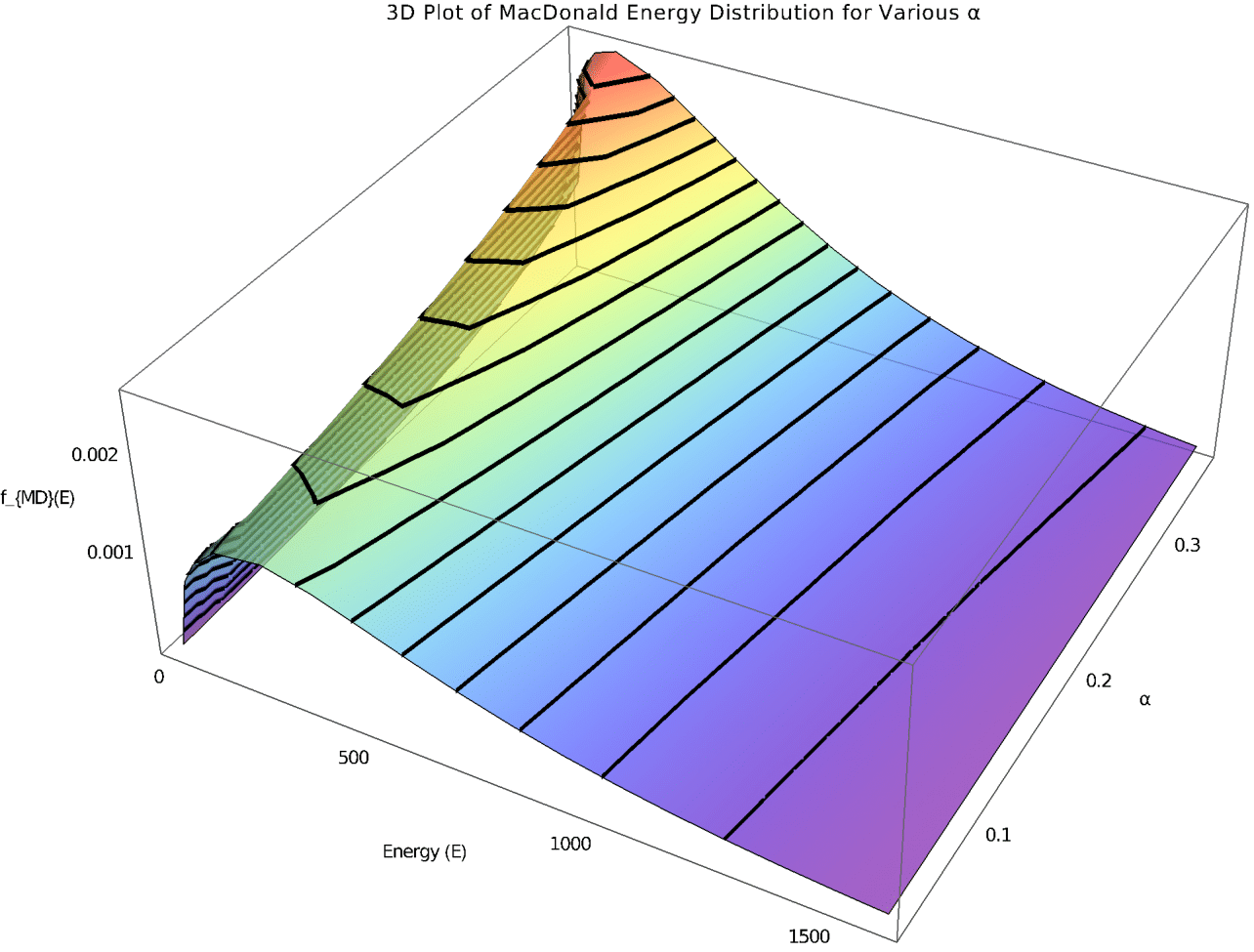}
    \caption{$f_{MD}(E)$ ($3$D)  for a fixed KT = 400 and $\alpha$ = \{0.04, \dots, 0.36\}.}
    \label{fig.4}
  \end{minipage}
\end{figure}

\newpage

\section{The Astrophysical Thermonuclear Functions}

As an isotropic nuclear reaction consisting of the cross section $\sigma$ independent of the collision angle, in 1994, \cite{anderson1994astrophysical} defined the standard case of the non-resonant thermonuclear function as
\begin{equation}\label{t1}
  \mathcal{I}_{1} (x, \upsilon) = \int_{0}^{\infty} y^{\upsilon} e^{-y} e^{-xy^{-\frac{1}{2}}} dy,
\end{equation}
where $y = \frac{E}{K T}$ and $x = \frac{\tau}{\sqrt{K T}}$ ($\tau$ is given in \eqref{r9}). Since the high-energy tail of the Maxwell-Boltzmann distribution may be cut off by dissipative collision events in the thermonuclear plasma, they stated
\begin{equation}\label{t2}
  \mathcal{I}_{2}^{(d)} (x, \upsilon) = \int_{0}^{d} y^{\upsilon} e^{-y} e^{-xy^{-\frac{1}{2}}} dy
\end{equation}
where $d$ denotes a certain typically high energy and for \eqref{t1}.
\begin{equation}\label{t3}
  \mathcal{I}_{3} (x, t, \upsilon) = \int_{0}^{\infty} y^{\upsilon} e^{-y} e^{-x(y+t)^{-\frac{1}{2}}} dy,
\end{equation} is the conventional Maxwell-Boltzmannian thermonuclear function that accounts for screening effects,
where $t$ is the electron screening parameter. Lastly, the thermonuclear function can be expressed as follows:
\begin{equation}\label{t4}
  \mathcal{I}_{4} (x, \delta, b, \upsilon) = \int_{0}^{\infty} y^{\upsilon} e^{-y} e^{-by^{\delta}} e^{-xy^{-\frac{1}{2}}} dy,
\end{equation}
where the high-energy tail of the Maxwell-Boltzmann distribution is enhanced or reduced by the parameter $\delta$.

While establishing astrophysical thermonuclear functions for Boltzmann-Gibbs statistics and Tsallis statistics, \cite{SMH} defined the generalization of standard, cut-off, depleted, and screened cases of non-resonant thermonuclear functions as
\begin{equation}\label{t5}
  \mathcal{I}_{1} (\upsilon-1, a, x, \varpi) = \int_{0}^{\infty} y^{\upsilon-1} e^{-ay} e^{-xy^{\varpi}} dy,
\end{equation}
\begin{equation*}
  (\Re(\upsilon)>0, \Re(a)>0, \Re(x)>0, \Re(\varpi)>0),
\end{equation*}
\begin{equation}\label{t6}
  \mathcal{I}_{2}^{(d)} (\upsilon-1, a, x, \varpi) = \int_{0}^{d} y^{\upsilon-1} e^{-ay} e^{-xy^{\varpi}} dy,
\end{equation}
\begin{equation*}
  (\Re(\upsilon)>0, \Re(a)>0, \Re(x)>0, \Re(\varpi)>0),
\end{equation*}
\begin{equation}\label{t7}
  \mathcal{I}_{3} (\upsilon-1, a, b, \delta, x, \varpi) = \int_{0}^{\infty} y^{\upsilon-1} e^{-ay-by^{\delta}} e^{-xy^{\varpi}} dy,
\end{equation}
\begin{equation*}
  (\Re(\upsilon)>0, \Re(a)>0, \Re(x)>0, \Re(\varpi)>0, \Re(b)>0, \delta>0),
\end{equation*}
\begin{equation}\label{t8}
  \mathcal{I}_{4} (\upsilon-1, a, x, t, \varpi) = \int_{0}^{\infty} y^{\upsilon-1} e^{-ay} e^{-x(y+t)^{\varpi}} dy,
\end{equation}
\begin{equation*}
  (\Re(\upsilon)>0, \Re(a)>0, \Re(x)>0, \Re(\varpi)>0),
\end{equation*}
respectively. In 2023, for $z>0, x>0, \varpi > 0$ and $0<\beta\leq 1$, \cite{Hau-Kab-Kum} evaluated a set of four general thermonuclear functions for the two-parameter Mittag-Leffler case, considering the standard, cut-off, depleted, and screening instances, which are of the forms,
\begin{align}\label{t9}
&\mathcal{I}_{1, \beta} (\upsilon, z, x, \varpi) = \int_{0}^{\infty} y^{\upsilon - 1} \mathbb{E}_{\beta, \beta} (-zy) e^{-xy^{-\varpi}}\, dy,\\
&\mathcal{I}_{2, \beta}^{(d)} (\upsilon, z, x, \varpi) = \int_{0}^{d} y^{\upsilon - 1} \mathbb{E}_{\beta, \beta} (-zy) e^{-xy^{-\varpi}}\, dy,\qquad d<\infty,\\
&\mathcal{I}_{3, \beta} (\upsilon, t, z, \delta, x, \varpi) = \int_{0}^{\infty} y^{\upsilon - 1} \mathbb{E}_{\beta, \beta} (-ty) e^{-zy^{\delta}-xy^{-\varpi}}\, dy, \qquad \delta>0,\\
&\mathcal{I}_{4, \beta} (\upsilon, z, x, u, \varpi) = \int_{0}^{\infty} y^{\upsilon - 1} \mathbb{E}_{\beta, \beta} (-zy) e^{-x(u+y)^{-\varpi}}\, dy, \qquad u>0,
\end{align}
respectively.

For $\Re(\upsilon)>0, x>0$ and $0<\alpha \leq 1$, the integral appeared in \eqref{m4} to be evaluated in this case is of the form
\begin{equation}\label{t10}
  \mathcal{T}_{1, \alpha}(\upsilon, x) = \int_{0}^{\infty} y^{\upsilon + \frac{1}{2}} \mathbb{K}_{\alpha + \frac{1}{2}} (y)\, e^{-xy^{-\frac{1}{2}}} dy,
\end{equation}
That is known as the collision probability integral, called the astrophysical thermonuclear function for non-resonant thermonuclear reactions in the MacDonald case. The more general thermonuclear function for the standard non-resonant case, we consider
\begin{equation}\label{t11}
\mathcal{T}_{1, \alpha}(\upsilon, z, x, \gamma) \stackrel{\text{def}}{=} \int_{0}^{\infty} y^{\upsilon - 1} \mathbb{K}_{\alpha + \frac{1}{2}} (zy)\, e^{-xy^{-\gamma}} dy,
\end{equation} where $z>0, x>0, \Re(\upsilon)>0, \Re(\gamma)>0$ and $0<\alpha\leq 1.$ The MacDonald extended thermonuclear functions in cut-off, depleted, and screened cases are of the following, we present
\begin{align}\label{t12}
\mathcal{T}_{2, \alpha}^{(d)} (\upsilon, z, x, \gamma)& \stackrel{\text{def}}{=} \int_{0}^{d} y^{\upsilon - 1} \mathbb{K}_{\alpha + \frac{1}{2}} (zy)\, e^{-xy^{-\gamma}} dy, \qquad d<\infty, \\
\mathcal{T}_{3, \alpha}(\upsilon, t, z, \delta, x, \gamma) & \stackrel{\text{def}}{=} \int_{0}^{\infty} y^{\upsilon - 1} \mathbb{K}_{\alpha + \frac{1}{2}} (ty)\, e^{-zy^{\delta}-xy^{-\gamma}} dy, \qquad \delta>0, \\
\mathcal{T}_{4, \alpha}(\upsilon, z, x, l, \gamma) & \stackrel{\text{def}}{=} \int_{0}^{\infty} y^{\upsilon - 1} \mathbb{K}_{\alpha + \frac{1}{2}} (zy)\, e^{-x(l+y)^{-\gamma}} dy, \qquad l>0,
\end{align}
respectively, where $z>0, x>0, \Re(\upsilon)>0, \Re(\gamma)>0$ and $0<\alpha\leq 1.$

\section{Closed-Form Representations}
In his study on symmetrical Fourier kernels, C. Fox defines the $H$-function as the Mellin-Barnes type path integral (see \cite{KS,MSH}):
\begin{align}\label{H-function} %%%(3.3)
&H^{a,\;b}_{\phi,\;\psi}(z)=H^{a,\;b}_{\phi,\;\psi}\left[z\Biggm| \begin{array}{c}
(\zeta_{i},P_{i})_{1,\phi}       \\
(\eta_{j},Q_{j})_{1,\psi}    \\
\end{array} \right]=H^{a,\;b}_{\phi,\;\psi}\left[z\Biggm| \begin{array}{c}
 (\zeta_{1},P_{1}),\cdots,(\zeta_{\phi},P_{\phi})          \\
  (\eta_{1},Q_{1}),\cdots,(\eta_{\psi},Q_{\psi})  \\
\end{array} \right]\notag\\
&\qquad\qquad\hskip 15mm=\frac{1}{2\pi i}\int_{\mathcal{L}}\;\Theta(s)\;z^{-s}\;ds\,,
\end{align}
where
\begin{equation}\label{H-function-Theta}
\Theta(s)=\frac{\prod\limits_{j=1}^{a}\Gamma(\eta_{j}+Q_{j}s)\prod\limits_{j=1}^{b}\Gamma(1-\zeta_{j}-P_{j}s)}
{\prod\limits_{j=a+1}^{\psi}\Gamma(1-\eta_{j}-Q_{j}s)\prod\limits_{j=b+1}^{\phi}\Gamma(\zeta_{j}+P_{j}s)}\,,
\end{equation}
and a good Mellin-Barnes type contour that starts at $\epsilon-i\infty$ and finishes at $\epsilon+i\infty\,(\epsilon \in \mathbb{R})$ is $\mathcal{L}=\mathcal{L}_{(i\epsilon;\infty)}$, such that every pole of $\Gamma(\eta_{j}+Q_{j}s)\;( j=1,\cdots,a)$ is different from every pole of $\Gamma(1-\zeta_{j}-P_{j}s)\;( j=1,\cdots,b)$. When an empty product is read as $1$, the inequalities are satisfied by the integers $a,\;b,\;\phi,\;\psi$. $0\leq a\leq \;\psi$ and $0\leq b\leq \;\phi$, the complex parameters $\zeta_{j}\;( j = 1,\cdots,\phi)$ and $\eta_{j}\;( j = 1,\cdots,\psi)$ are so limited that the integrand's poles do not coincide, and the coefficients $P_{j}\;(j=1,\cdots,\phi)$ and $Q_{j}\;(j=1,\cdots,\psi)$ are positive real numbers.

The two-variable $H$-function is defined by
\begin{align}\label{H-Function-2}
&H[y,z]=H^{\;\mu_{1},0:\mu_{2},\nu_{2};\mu_{3},\nu_{3}}_{\;\chi_{1},\omega_{1}:\chi_{2},\omega_{2};\chi_{3},\omega_{3}}\left[\begin{array}{c}y\\z \end{array}\,\Biggm| \begin{array}{c}
 (a_{l};\alpha_{l},\mathcal{A}_{l})_{1,\chi_{1}}:(c_{l},\mathcal{C}_{l})_{1,\chi_{2}};(e_{l},\mathcal{E}_{l})_{1,\chi_{3}}          \\
 (b_{l};\beta_{l},\mathcal{B}_{l})_{1,\omega_{1}}:(d_{l},\mathcal{D}_{l})_{1,\omega_{2}};(f_{l},\mathcal{F}_{l})_{1,\omega_{3}}    \\
\end{array} \right]\notag\\
&\hskip 5mm=\frac{1}{(2\pi i)^{2}}\int_{\mathcal{O}_{s}}\int_{\mathcal{O}_{t}}\; \theta_{1}(s)\;\theta_{2}(t)\;\Delta(s,\;t)\;y^{-s}\;z^{-t}\;ds\;dt
\end{align}
such that $y\neq0 \neq z$, and $i=\sqrt{-1}$, where $\theta_{1}(s)$, $\theta_{2}(t)$, and $\Delta(s,\;t)$ are provided by
$$\Delta(s,\;t)=\frac{\prod\limits_{l=1}^{\mu_{1}}\Gamma(b_{l}+\beta_{l}s+\mathcal{B}_{l}t)}{\prod\limits_{l=\mu_{1}+1}^{\omega_{1}}
\Gamma(1 - b_{l}-\beta_{l}s-\mathcal{B}_{l}t)\prod\limits_{l=1}^{\chi_{1}}\Gamma(a_{l}-\alpha_{l}s+\mathcal{B}_{l}t)}$$
$$\theta_{1}(s)=\frac{\prod\limits_{l=1}^{\nu_{2}}\Gamma(1-c_{l}-\mathcal{C}_{l}s)
\prod\limits_{l=1}^{\mu_{2}}\Gamma(d_{l}+\mathcal{D}_{l}s)}{\prod\limits_{l=\nu_{2}+1}^{\chi_{2}}\Gamma(c_{l}+\mathcal{C}_{l}s)
\prod\limits_{l=\mu_{2}+1}^{\omega_{2}}\Gamma(1-d_{l}-\mathcal{D}_{l}s)}$$
$$\theta_{2}(t)=\frac{\prod\limits_{l=1}^{\nu_{3}}\Gamma(1-e_{l}-\mathcal{E}_{l}t)
\prod\limits_{l=1}^{\mu_{3}}\Gamma(f_{l}+\mathcal{F}_{l}t)}{\prod\limits_{l=\nu_{3}+1}^{\chi_{3}}\Gamma(e_{l}+\mathcal{E}_{l}t)
\prod\limits_{l=\mu_{3}+1}^{\omega_{3}}\Gamma(1-f_{l}-\mathcal{F}_{l}t)}$$
where $\mathcal{O}_{s}$ is a suitable Mellin-Barnes type contour that begins at the point $-i\infty$ and ends at the point $+i\infty$ so that all of the poles of $\Gamma(d_{l}+\mathcal{D}_{l}s)\;( l=1,\cdots,\mu_2)$ are distinct from those of $\Gamma(1-c_{l}-\mathcal{C}_{l}s)\;( l=1,\cdots,\nu_2)$ and $\Gamma(b_{l}+\beta_{l}s+\mathcal{B}_{l}t)\;( l=1,\cdots,\mu_1)$.
Additionally, $\mathcal{O}_{t}$ represents a suited contour of the Mellin-Barnes type that begins at the point $-i\infty$ and ends at the point $+i\infty$ so that all of the poles of $\Gamma(f_{l}+\mathcal{F}_{l}t)\;( l=1,\cdots,\mu_3)$ are distinct from those of $\Gamma(1-e_{l}-\mathcal{E}_{l}t)\;( l=1,\cdots,\nu_3)$ and $\Gamma(b_{l}+\beta_{l}s+\mathcal{B}_{l}t)\;( l=1,\cdots,\mu_1)$. An empty product is regarded as 1, the non-negative integers $\mu_{l},\;\nu_{l},\;\chi_{l},\;\omega_{l}$ comply with the inequalities $0\leq \mu_{l}\leq \;\omega_{l}$, $\omega_{l}\geq0$ and $0\leq \nu_{l}\leq \;\chi_{l}$\;(l=1,2,3;\,l=1,2), the coefficient $\alpha's$, $\beta's$, $ \mathcal{A}'s$, $\mathcal{B}'s$, $\mathcal{C}'s$, $\mathcal{D}'s$, $\mathcal{E}'s$ and the complex parameters $a_{l}\;( l = 1,\cdots,\chi)$ and $b_{l}\;( l = 1,\cdots,\omega)$ are so restricted that none of the integrand's poles coincide, and $\mathcal{F}'s$ are positive real numbers.
Refer to \cite{KS, MSH} for further information on Fox's $H$-function in one variable and also in two variables. The $H$-function in \eqref{H-function} reduces to the Meijer $G$-function (\cite{KS,MSH}) if $P_{j}=Q_{j}=1$.
\begin{align}\label{GF} %%%(3.3)
&G^{a,\;b}_{\phi,\;\psi}(z)=G^{a,\;b}_{\phi,\;\psi}\left[z\Biggm| \begin{array}{c}
 \zeta_{1},\cdots,\zeta_{\phi} \\
 \eta_{1},\cdots,\eta_{\psi}  \\
\end{array} \right]=\frac{1}{2\pi i}\int_{\mathcal{L}}\;\Lambda(s)\;z^{-s}\;ds\,,
\end{align}
where
\begin{equation}\label{G-function-Lambda}
\Lambda(s)=\frac{\prod\limits_{j=1}^{a}\Gamma(\eta_{j}+s)\prod\limits_{j=1}^{b}\Gamma(1-\zeta_{j}-s)}
{\prod\limits_{j=a+1}^{\psi}\Gamma(1-\eta_{j}-s)\prod\limits_{j=b+1}^{\phi}\Gamma(\zeta_{j}+s)}\,.
\end{equation}

\cite{Hau-Joh} evaluated the collision probability integral between the particles (i.e, astrophysical thermonuclear function) given in \eqref{t1} and found the closed-form representation in terms of Meijer $G$-function as
\begin{equation}\label{tg1}
  \mathcal{I}_{1} (x, \upsilon) = \pi^{-\frac{1}{2}}\,G^{3,\;0}_{0,\;3}\left[\left(\frac{x}{2}\right)^{2}\Biggm| \begin{array}{c}
                             \\
 1+ \upsilon, \frac{1}{2}, 0  \\
\end{array} \right], \qquad (|\arg x^{2}|<\frac{3}{2} \pi).
\end{equation}
The astrophysical thermonuclear functions in line with the Maxwell-Boltzmann, Boltzmann-Gibbs, and Tsallis distributions were quantitatively examined by several authors (see \cite{anderson1994astrophysical, Chau-Hau-Mat, Hau-Joh, Hau-Kuma, HM, Kum-Hau, SMH}), who subsequently developed closed-form representations for non-resonant standard, cut-off, and depleted tail circumstances. \cite{Hau-Kab-Kum} established the analytic forms of thermonuclear functions, proposing the Mittag-Leffler relative velocity and energy distribution functions. Moreover, they evaluated the closed-form representations of the proposed Mittag-Leffler case of thermonuclear functions.

\subsection{Evaluation of (3.14) and (3.16)}

\begin{thm}\label{Th1}
Let $\Re(\upsilon)>0, \Re(\gamma)>0, z>0, x>0$ and $0<\alpha \leq 1.$ Then the following $H$-function representation is valid:
\begin{equation}\label{H1}
\mathcal{T}_{1, \alpha}(\upsilon, z, x, \gamma) = \frac{1}{4} \left(\frac{2}{z}\right)^{\upsilon} H^{3,\;0}_{0,\;3}\left[x\left(\frac{z}{2}\right)^{\gamma}\Biggm| \begin{array}{c}
   --    \\
(0, 1), \left(\frac{\upsilon + \alpha}{2} + \frac{1}{4}, \frac{\gamma}{2}\right), \left(\frac{\upsilon - \alpha}{2} - \frac{1}{4}, \frac{\gamma}{2}\right)  \\
\end{array} \right].
\end{equation}
\end{thm}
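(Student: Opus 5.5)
The plan is to compute the Mellin transform of $\mathcal{T}_{1, \alpha}(\upsilon, z, x, \gamma)$ with respect to $x$ and recognize the result as the Mellin–Barnes kernel $\Theta(s)$ in \eqref{H-function-Theta}. Inverting then gives \eqref{H1} directly from the definition \eqref{H-function}.

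First, I take the Mellin transform in $x$. For $\Re(s)$ sufficiently large I write
\begin{equation*}
\int_{0}^{\infty} x^{s-1}\,\mathcal{T}_{1, \alpha}(\upsilon, z, x, \gamma)\,dx
= \int_{0}^{\infty} y^{\upsilon-1}\mathbb{K}_{\alpha+\frac12}(zy)\left\{\int_{0}^{\infty} x^{s-1}e^{-xy^{-\gamma}}dx\right\}dy
= \Gamma(s)\int_{0}^{\infty} y^{\upsilon+\gamma s-1}\,\mathbb{K}_{\alpha+\frac12}(zy)\,dy .
\end{equation*}
The interchange of integrations will be justified by Fubini's theorem, since the integrand is nonnegative for real $s$ and absolutely integrable for complex $s$ with large real part.

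Next, I use the classical Mellin transform of the MacDonald function (see \cite{Wats}),
\begin{equation*}
\int_{0}^{\infty} y^{w-1}\mathbb{K}_{\nu}(zy)\,dy = 2^{w-2}z^{-w}\,\Gamma\!\left(\tfrac{w+\nu}{2}\right)\Gamma\!\left(\tfrac{w-\nu}{2}\right),\qquad \Re(w)>|\Re(\nu)| .
\end{equation*}
I apply it with $w=\upsilon+\gamma s$ and $\nu=\alpha+\frac12$. The Mellin transform then becomes
\begin{equation*}
\frac14\left(\frac{2}{z}\right)^{\upsilon}\Gamma(s)\,\Gamma\!\left(\tfrac{\upsilon+\alpha}{2}+\tfrac14+\tfrac{\gamma}{2}s\right)\Gamma\!\left(\tfrac{\upsilon-\alpha}{2}-\tfrac14+\tfrac{\gamma}{2}s\right)\left(\left(\tfrac{z}{2}\right)^{\gamma}\right)^{-s}.
\end{equation*}
Mellin inversion then expresses $\mathcal{T}_{1,\alpha}$ as $\frac{1}{2\pi i}\int_{\mathcal{L}}(\cdots)\,x^{-s}ds$. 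Combining $x^{-s}$ with $((z/2)^{\gamma})^{-s}$ gives the argument $x(z/2)^{\gamma}$.

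Comparing with \eqref{H-function}–\eqref{H-function-Theta}, we have $\phi=b=0$, $\psi=a=3$, and the pairs $(\eta_j,Q_j)$ are $(0,1)$, $\bigl(\frac{\upsilon+\alpha}{2}+\frac14,\frac{\gamma}{2}\bigr)$ and $\bigl(\frac{\upsilon-\alpha}{2}-\frac14,\frac{\gamma}{2}\bigr)$. This is exactly \eqref{H1}.

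The main obstacle is the analytic justification rather than the algebra. The hypothesis $\Re(\upsilon)>0$ alone does not make $y^{\upsilon-1}\mathbb{K}_{\alpha+\frac12}(zy)$ integrable at $0$, because $\mathbb{K}_{\alpha+\frac12}(zy)\sim c\,y^{-\alpha-\frac12}$ there. Convergence of the original integral is instead supplied by the factor $e^{-xy^{-\gamma}}$ with $x>0$ and $\Re(\gamma)>0$. The Mellin-transform step therefore requires restricting to $\Re(\upsilon+\gamma s)>\alpha+\frac12$, so the vertical contour $\mathcal{L}$ must be placed far enough to the right that all poles of the three gamma factors lie to its left.

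With the contour so placed, I would verify the inversion by the standard $H$-function theory \cite{KS,MSH}. For $H^{3,0}_{0,3}$ one has $\sum Q_j = 1+\gamma>0$, so the integrand decays exponentially along vertical lines and the Mellin–Barnes integral converges absolutely for $x>0$. Uniqueness of the Mellin transform then yields \eqref{H1}.
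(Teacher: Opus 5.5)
Your proposal is correct and follows the paper's proof: take the Mellin transform in $x$, interchange the integrals, apply the Mellin transform of $\mathbb{K}_{\nu}$ (the paper cites Gradshteyn--Ryzhik 6.561(16), which is equivalent to your formula), invert, and compare with the definition of the $H$-function. Your treatment is more careful than the paper's in three places: the convergence region $\Re(\upsilon+\gamma s)>\alpha+\frac12$, the placement of the contour, and absolute convergence via $\sum Q_j=1+\gamma>0$, all of which the paper only asserts; you also correctly keep the factor $x^{-s}$, which the paper's intermediate display drops.
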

\begin{proof}
By making use of the one-variable Mellin transform (\cite{MSH})
\begin{equation*}
\mathcal{M}_{\mathcal{T}_{1, \alpha}} (s) = \int_{0}^{\infty} x^{s-1} \left\{\int_{0}^{\infty} y^{\upsilon-1} f(x) \, dy\right\} dx,
\end{equation*}
to the standard thermonuclear function \eqref{t11}, we have
\begin{equation*}
\mathcal{M}_{\mathcal{T}_{1, \alpha}} (s) = \int_{0}^{\infty} x^{s-1} \left\{\int_{0}^{\infty} y^{\upsilon-1} \mathbb{K}_{\alpha + \frac{1}{2}} (zy)\, e^{-xy^{-\gamma}} \, dy\right\} dx.
\end{equation*}
By utilising \cite[p.676, Eq.(6.561(16))]{Gr-Ry} and swapping the integrals because of the uniform convergence, we can solve
\begin{equation}\label{S1}
\mathcal{M}_{\mathcal{T}_{1, \alpha}} (s) = \frac{1}{4} \left(\frac{2}{z}\right)^{\upsilon + \gamma s} \Gamma(s) \Gamma\left(\frac{\upsilon + \alpha}{2} + \frac{1}{4} + \frac{\gamma}{2}s\right) \Gamma\left(\frac{\upsilon - \alpha}{2} - \frac{1}{4} + \frac{\gamma}{2}s\right),
\end{equation}
where $\Re(z)>0, \Re(s)>0, \Re(\upsilon+\gamma s \pm \alpha + \frac{1}{2})>0.$ Employing the inverse Mellin transform to \eqref{S1}, the Mellin-Barnes integral we obtain
  \begin{equation}\label{S2}
    \mathcal{T}_{1, \alpha} = \frac{1}{4} \left(\frac{2}{z}\right)^{\upsilon} \frac{1}{2 \pi \iota} \int_{\mathcal{L}} \Gamma(s) \Gamma\left(\frac{\upsilon + \alpha}{2} + \frac{1}{4} + \frac{\gamma}{2}s\right) \Gamma\left(\frac{\upsilon - \alpha}{2} - \frac{1}{4} + \frac{\gamma}{2}s\right) \left(\frac{z^{\gamma}}{2^{\gamma}}\right)^{-s} ds.
  \end{equation}
  Now comparing \eqref{S2} to the definition of $H$-function \eqref{H-function} yields the required result.
\end{proof}

\begin{cor}\label{Cra}
  Let $\Re(\upsilon)>0, x>0$ and if we take the transformation of $\upsilon \mapsto \upsilon + \frac{3}{2}$ and for $z = 1, \gamma = \frac{1}{2},$ then
  \begin{equation}\label{Cr1}
    \mathcal{T}_{1, \alpha} (\upsilon, 1, x, \frac{1}{2}) = 2^{\upsilon - \frac{1}{2}} H^{3,\;0}_{0,\;3}\left[\frac{x}{\sqrt{2}}\Biggm| \begin{array}{c}
   --    \\
(0, 1), \left(\frac{\upsilon + \alpha+2}{2}, \frac{1}{4}\right), \left(\frac{\upsilon - \alpha + 1}{2}, \frac{1}{4}\right)  \\
\end{array} \right].
  \end{equation}
\end{cor}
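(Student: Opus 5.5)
The corollary is a direct specialization of Theorem~\ref{Th1}. I would substitute the stated parameter values into \eqref{H1} and simplify; no new integral evaluation is needed.

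\emph{Step 1: identify the left-hand side.} With $z=1$ and $\gamma=\tfrac12$, and with $\upsilon$ replaced by $\upsilon+\tfrac32$ in \eqref{t11}, the integrand becomes $y^{\upsilon+\frac12}\,\mathbb{K}_{\alpha+\frac12}(y)\,e^{-xy^{-1/2}}$. This is exactly the physical collision integral \eqref{t10} appearing in \eqref{m4}. So the left-hand side of \eqref{Cr1} should be read as $\mathcal{T}_{1,\alpha}(\upsilon+\tfrac32,1,x,\tfrac12)$ in the notation of \eqref{t11}, which is $\mathcal{T}_{1,\alpha}(\upsilon,x)$ in the notation of \eqref{t10}.

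\emph{Step 2: check the hypotheses of Theorem~\ref{Th1}.} We need $\Re(\upsilon+\tfrac32)>0$, which follows from $\Re(\upsilon)>0$. We also need $\Re(\gamma)=\tfrac12>0$, $z=1>0$, $x>0$ and $0<\alpha\le1$, all of which hold. The convergence conditions attached to \eqref{S1}, namely $\Re(\upsilon+\tfrac32+\tfrac{s}{2}\pm(\alpha+\tfrac12))>0$ for $\Re(s)>0$, also hold because $\alpha\le1$.

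\emph{Step 3: simplify each piece of \eqref{H1}.}
\begin{itemize}
\item The prefactor is $\tfrac14\cdot 2^{\upsilon+\frac32}=2^{\upsilon-\frac12}$.
\item The argument is $x(\tfrac12)^{1/2}=x/\sqrt2$.
\item The first lower parameter stays $(0,1)$.
\item The second lower parameter becomes $\bigl(\tfrac{\upsilon+\frac32+\alpha}{2}+\tfrac14,\tfrac14\bigr)=\bigl(\tfrac{\upsilon+\alpha+2}{2},\tfrac14\bigr)$.
\item The third lower parameter becomes $\bigl(\tfrac{\upsilon+\frac32-\alpha}{2}-\tfrac14,\tfrac14\bigr)=\bigl(\tfrac{\upsilon-\alpha+1}{2},\tfrac14\bigr)$.
\end{itemize}
Collecting these gives \eqref{Cr1}.

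There is no real obstacle. The only point needing care is the notational one in Step 1: the left-hand side must be understood as the shifted function evaluated at $\upsilon+\tfrac32$, not as \eqref{t11} at $\upsilon$ itself.
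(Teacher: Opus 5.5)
Your proposal is correct and is essentially the paper's own (implicit) argument: the corollary is just Theorem~\ref{Th1} with $\upsilon\mapsto\upsilon+\tfrac32$, $z=1$, $\gamma=\tfrac12$, and your simplifications of the prefactor $\tfrac14\cdot 2^{\upsilon+\frac32}=2^{\upsilon-\frac12}$, the argument $x/\sqrt2$, and the two shifted lower parameters all check out. Your remark that the left-hand side must be read as \eqref{t11} evaluated at $\upsilon+\tfrac32$, which is the physical integral \eqref{t10}, correctly resolves the paper's loose notation.
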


\begin{thm}\label{Th2}
  Let $\Re(\upsilon)>0, \Re(\gamma)>0, z>0, x>0, t>0, \delta>0$ and $0<\alpha \leq 1.$ Then the following $H$-function representation in two variables is valid:
  \begin{equation}\label{H2}
    \mathcal{T}_{3, \alpha}(\upsilon, t, z, \delta, x, \gamma) = \frac{x^{\frac{\upsilon}{\gamma}}}{4\gamma}
    H^{\;1,0:2,0;1,0}_{\;0,1:0,2;0,1}\left[\begin{array}{c}\frac{t x^{\frac{1}{\gamma}}}{2}\\x^{\frac{\delta}{\gamma}} z \end{array}\,\Biggm|
\begin{array}{c}
            -- \\
             \left(\frac{\gamma-\upsilon}{\gamma}, \frac{1}{\gamma}, \frac{\delta}{\gamma}\right), \left(\frac{-2\alpha-1}{4}, \frac{1}{2}\right), \left(\frac{2\alpha+1}{4}, \frac{1}{2}\right), (0, 1)
  \end{array} \right].
  \end{equation}
\end{thm}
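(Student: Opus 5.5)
The plan is to write each of the three factors in the integrand of $\mathcal{T}_{3,\alpha}$ as a Mellin--Barnes integral, integrate in $y$ by Euler's Gamma integral, and then reorganise the result into the two-variable $H$-function \eqref{H-Function-2}.

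\textbf{Step 1: the three Mellin--Barnes representations.} I use the classical representation
\begin{equation*}
\mathbb{K}_{\alpha+\frac12}(ty)=\frac{1}{4}\,\frac{1}{2\pi i}\int_{\mathcal{O}_s}\Gamma\!\left(\frac{s}{2}-\frac{2\alpha+1}{4}\right)\Gamma\!\left(\frac{s}{2}+\frac{2\alpha+1}{4}\right)\left(\frac{ty}{2}\right)^{-s}ds,
\end{equation*}
which is the $G^{2,0}_{0,2}$ form of the MacDonald function. It accounts for the pairs $\left(\frac{-2\alpha-1}{4},\frac12\right)$ and $\left(\frac{2\alpha+1}{4},\frac12\right)$ and for the argument $t/2$. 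For the depletion factor I write
\begin{equation*}
e^{-zy^{\delta}}=\frac{1}{2\pi i}\int_{\mathcal{O}_u}\Gamma(u)(zy^{\delta})^{-u}\,du,
\end{equation*}
which gives the pair $(0,1)$ in the second variable. I will keep the factor $e^{-xy^{-\gamma}}$ explicit.

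\textbf{Step 2: the $y$-integration.} I substitute both representations into $\mathcal{T}_{3,\alpha}$. Absolute convergence on vertical contours with suitably chosen real parts (Stirling bounds on the Gamma factors) justifies a Fubini interchange. The inner integral is then
\begin{equation*}
\int_0^\infty y^{\upsilon-s-\delta u-1}e^{-xy^{-\gamma}}dy.
\end{equation*}
With $w=xy^{-\gamma}$ this equals
\begin{equation*}
\frac{1}{\gamma}\,x^{(\upsilon-s-\delta u)/\gamma}\,\Gamma\!\left(\frac{s+\delta u-\upsilon}{\gamma}\right),
\end{equation*}
valid when $\Re(s+\delta u)>\Re\upsilon$. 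This produces the prefactor $x^{\upsilon/\gamma}/(4\gamma)$ and the effective variables $\frac{t}{2}x^{1/\gamma}$ and $zx^{\delta/\gamma}$, as in \eqref{H2}.

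\textbf{Step 3: matching with \eqref{H-Function-2}.} The coupled Gamma factor $\Gamma\!\left(-\frac{\upsilon}{\gamma}+\frac{1}{\gamma}s+\frac{\delta}{\gamma}u\right)$ must be matched with the $\Delta(s,t)$ factor $\Gamma(b_1+\beta_1 s+\mathcal{B}_1 t)$, using $\mu_1=1$ and $\omega_1=1$. Read literally, this gives $b_1=-\upsilon/\gamma$, whereas the statement lists $(\gamma-\upsilon)/\gamma=1-\upsilon/\gamma$. I would reconcile the two in one of two ways. One is to use the convention in which the numerator factor is written as $\Gamma(1-b_1-\ldots)$ after the change of variables $s\to -s$, $u\to -u$. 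The other is to shift $s$ by $1$, at the cost of a power of $t$.

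\textbf{Main obstacle.} The main difficulty I expect is this sign and normalisation bookkeeping: the orientation of $s$ (the paper's $H$-function uses $y^{-s}$) together with the exact form of $\Delta$. I would first redo Step 2 with $(ty/2)^{s}$, $(zy^\delta)^{s}$ and $\Gamma(-s)$-type factors. I would then check whether the stated parameters arise, and if not, record the correct parameter set that the computation actually gives.

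Finally, I would state the convergence conditions needed for contours separating the poles to exist: $\Re(\upsilon)>0$, $\delta>0$, $\Re(\gamma)>0$, and positivity of $t$, $z$ and $x$.
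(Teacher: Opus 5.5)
Your Steps 1--2 are the paper's computation in dual form. The paper takes the two-variable Mellin transform in $(t,z)$, evaluates it with the Mellin transform of $\mathbb{K}_{\alpha+\frac12}$ and Euler's integral, and then inverts. You insert the corresponding Mellin--Barnes integrals directly and integrate in $y$. Both routes lead to the same double integrand, and your representations already use $(\cdot)^{-s}$, $(\cdot)^{-u}$, so no re-orientation is needed to match \eqref{H-Function-2}. Your $y$-integral $\frac{1}{\gamma}x^{(\upsilon-s-\delta u)/\gamma}\Gamma\bigl(\frac{s+\delta u-\upsilon}{\gamma}\bigr)$ on $\Re(s+\delta u)>\Re\upsilon$ is correct.

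The gap is Step 3: neither reconciliation can work, because the discrepancy you found is a genuine error. It is in the statement, and also in the paper's \eqref{S3}. There the paper writes $\Gamma\bigl(1-\frac{\upsilon}{\gamma}+\frac{s_1}{\gamma}+\frac{\delta}{\gamma}s_2\bigr)$ under the reversed condition $\Re(\upsilon-s_1-\delta s_2)>0$, for which the $y$-integral diverges at $\infty$. A change of contour variable cannot change the value of an integral, and \eqref{H-Function-2} fixes the coupled factor as $\Gamma(b_1+\beta_1 s+\mathcal{B}_1 t)$ against $y^{-s}z^{-t}$. Concretely, $s\to-s$, $u\to-u$ gives $1+\frac{\upsilon}{\gamma}$, not $1-\frac{\upsilon}{\gamma}$, and it moves the Bessel Gammas into the $\Gamma(1-c_l-\mathcal{C}_l s)$ slots. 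Raising the argument by $1$ requires $s\to s+\gamma$, not $s\to s+1$. That shift turns the pairs into $\bigl(\mp\frac{2\alpha+1}{4}+\frac{\gamma}{2},\frac12\bigr)$ and adds a factor $\bigl(\frac{t}{2}x^{1/\gamma}\bigr)^{-\gamma}$.

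In fact \eqref{H2} is false as stated. Rerun your own Steps 1--2 with $e^{-xy^{-\gamma}}$ replaced by $xy^{-\gamma}e^{-xy^{-\gamma}}$. This turns $\Gamma(\theta)$ into $\Gamma(1+\theta)$, so the stated right-hand side is exactly
\[
\int_0^\infty y^{\upsilon-1}\mathbb{K}_{\alpha+\frac12}(ty)\,e^{-zy^{\delta}}\,xy^{-\gamma}e^{-xy^{-\gamma}}\,dy=-x\,\frac{\partial}{\partial x}\mathcal{T}_{3,\alpha}(\upsilon,t,z,\delta,x,\gamma).
\]
By dominated convergence (real $\upsilon>\alpha+\frac12$), this tends to $0$ as $x\to0^{+}$. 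Meanwhile $\mathcal{T}_{3,\alpha}$ tends to $\int_0^\infty y^{\upsilon-1}\mathbb{K}_{\alpha+\frac12}(ty)e^{-zy^{\delta}}dy>0$.

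A second check: let $z\to0^{+}$, take the residue at $u=0$, and substitute $s=\upsilon+\gamma\sigma$. Your version then reproduces Theorem~\ref{Th1} with $z$ replaced by $t$, whereas the stated version gives $(1,1)$ in place of $(0,1)$.

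So take your fallback. What you can prove is \eqref{H2} with the first parameter triple replaced by $\bigl(-\frac{\upsilon}{\gamma},\frac{1}{\gamma},\frac{\delta}{\gamma}\bigr)$. The contours should satisfy $\Re s>\alpha+\frac12$, $\Re u>0$ and $\Re(s+\delta u)>\Re\upsilon$.
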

\begin{proof}
  By making use of the two-variable Mellin transform (\cite{MSH})
  \begin{equation*}
    \mathcal{M}_{\mathcal{T}_{3, \alpha}} (s) = \int_{0}^{\infty} t^{s_{1}-1} \int_{0}^{\infty} z^{s_{2}-1} \left\{\int_{0}^{\infty} y^{\upsilon-1} f(t, z) \, dy\right\} dz \, dt,
  \end{equation*}
  to the standard thermonuclear function (3.16), we have
  \begin{equation*}
    \mathcal{M}_{\mathcal{T}_{3, \alpha}} (s) = \int_{0}^{\infty} t^{s_{1}-1} \int_{0}^{\infty} z^{s_{2}-1} \left\{\int_{0}^{\infty} y^{\upsilon-1} \mathbb{K}_{\alpha + \frac{1}{2}} (ty)\, \exp{(-zy^{\delta}-xy^{-\gamma})} \, dy\right\} dz \, dt.
  \end{equation*}
  By utilising \cite[p.346, Eq.(3.381(4))]{Gr-Ry} and swapping the integrals because of the uniform convergence, we can solve
\begin{align}\label{S3}
&\mathcal{M}_{\mathcal{T}_{3, \alpha}} (s) = \frac{2^{s_{1}-2}\, x^{\frac{\upsilon-s_{1}-\delta s_{2}}{\gamma}}}{\gamma} \Gamma(s_{2}) \Gamma\left(\frac{2\alpha + 1}{4} + \frac{s_{1}}{2}\right) \notag \\
&\qquad \qquad \qquad \times \Gamma\left(\frac{s_{1}}{2} - \frac{2\alpha + 1}{4}\right) \Gamma\left(1- \frac{\upsilon}{\gamma} + \frac{s_{1}}{\gamma} + \frac{\delta}{\gamma} s_{2}\right),
\end{align}
where $\Re(\gamma)>0, \Re(s_{2})>0, \Re(\upsilon - s_{1} - \delta s_{2})>0$ and $\Re\left(\frac{\upsilon - s_{1} - \delta s_{2}}{\gamma}\right)<1.$ Employing the inverse Mellin transform in two variables to \eqref{S3}, the Mellin-Barnes integral we obtain
\begin{align}\label{S4}
    &\mathcal{T}_{3, \alpha} = \frac{x^{\frac{\upsilon}{\gamma}}}{4 \gamma} \frac{1}{(2 \pi \iota)^{2}} \int_{\mathcal{L}_{1}} \int_{\mathcal{L}_{2}} \Gamma(s_{2}) \Gamma\left(\frac{2\alpha + 1}{4} + \frac{s_{1}}{2}\right) \Gamma\left(\frac{s_{1}}{2} - \frac{2\alpha + 1}{4}\right)  \notag \\
&\qquad \qquad \qquad \times \Gamma\left(1- \frac{\upsilon}{\gamma} + \frac{s_{1}}{\gamma} + \frac{\delta}{\gamma} s_{2}\right) (x^{\frac{1}{\gamma}} \frac{t}{2})^{-s_{1}} (x^{\frac{\delta}{\gamma}} z)^{-s_{2}} ds_{1} \, ds_{2}.
  \end{align}
  Now comparing \eqref{S4} to the definition of $H$-function \eqref{H-function} yields the required result.
\end{proof}

It is clearly evident that \eqref{Cr1}, which is identical to \eqref{t10}, appears in the MacDonald extended thermonuclear rate of reaction. Thus, in the illustration of the $H$-function, the MacDonald rate reaction probability integral, we get
\begin{align}\label{S5}
  & ^{MD}\hat{r}_{m n} = (1 - \frac{1}{2} \delta_{m n}) N_{m} N_{n} \left(\frac{2}{M}\right)^{\frac{1}{2}} \left(\frac{2}{\pi}\right)^{\frac{3}{2}} \Gamma\left(\frac{1 - \alpha}{2}\right) \notag\\ &\qquad  \times \sum_{\upsilon = 0}^{2} \left(\frac{1}{2 K T}\right)^{-\upsilon + \frac{1}{2}} \frac{S^{(\upsilon)} (0)}{\upsilon !} H^{3,\;0}_{0,\;3}\left[\frac{x}{\sqrt{2}}\Biggm| \begin{array}{c}
   --    \\
(0, 1), \left(\frac{\upsilon + \alpha+2}{2}, \frac{1}{4}\right), \left(\frac{\upsilon - \alpha + 1}{2}, \frac{1}{4}\right)  \\
\end{array} \right]
\end{align}
where $0<\alpha \leq 1, \Re(\upsilon)>0$ and $x = \frac{\tau}{\sqrt{KT}},$ that $\tau$ is given in \eqref{r9}.

\section{Conclusion}
The present study has developed a rigorous closed-form analytic framework for non-resonant thermonuclear reaction rates by replacing the classical Maxwell-Boltzmann distribution with a generalized velocity and kinetic energy distribution formulated in terms of the MacDonald function - the modified Bessel function of the second kind. This substitution is physically justified by conditions prevailing in stellar interiors where local departures from hydrostatic equilibrium invalidate the standard Maxwellian description of particle kinetics. Within this framework, the non-resonant thermonuclear reaction rate integrals have been systematically extended across four physically and mathematically distinct regimes - the standard, cut-off, depletion, and screened cases - each formulated in the center-of-mass system via the MacDonald-function-based relative velocity distribution, thereby furnishing a self-consistent and unified analytic treatment of the astrophysical thermonuclear functions. A central contribution of this work is the derivation of Mellin transform representations of these extended thermonuclear functions, which facilitated their closed-form expression in terms of the Fox $H$-function, the most general class of hypergeometric-type functions in the theory of generalized special functions, yielding representations that are simultaneously compact, analytically tractable, and asymptotically accessible. Explicit evaluations of the reaction probability integrals have further been carried out for several functional forms of the slowly varying astrophysical cross-section factor $S(E)$, with the extended formalism shown to recover its classical Maxwell-Boltzmann counterparts in the appropriate limiting cases, thereby confirming the internal consistency of the theoretical development.

The results obtained herein carry significant implications for the quantitative modeling of thermonuclear processes in dense and non-equilibrium stellar environments, encompassing the cores of main-sequence stars, red giants, and compact objects, where Coulomb screening, particle depletion, and energy cut-off effects are physically consequential. The analytic machinery established through the combined use of the Fox $H$-function and Mellin transform techniques constitutes a powerful and extensible mathematical apparatus, naturally admitting generalizations to resonant reaction mechanisms, multi-particle interaction channels, and higher-order quantum corrections to nuclear cross-sections. Moreover, the distributional framework introduced here provides a principled platform for incorporating non-extensive statistical distributions into the study of thermonuclear kinetics in cosmological settings, including primordial nucleosynthesis under conditions of non-equilibrium plasma evolution in the early universe. Future investigations will be directed toward establishing explicit numerical benchmarks of the extended reaction rates against standard stellar structure codes and toward quantifying the astrophysical consequences of MacDonald-distribution-based thermonuclear rates within the broader context of stellar evolution modeling.

%%%%%%%%%%%%%%%%%%%%%%%%%%%%%%%%%%%%%%%%%%%%%%%%%%%%%%%%%%%%%%%%%%%%%%%%%%%%%%%%%%%

\backmatter

\bmhead{Acknowledgments}
I sincerely thank Professor Dr. Rakesh K. Parmar for providing valuable insights and guidance throughout this research. Special thanks are also extended to the reviewer for their insightful feedback and helpful recommendations.

\section*{Declarations}

\bmhead{Author contribution}
Author contribution: The author wrote the paper.

\bmhead{Funding}
No funding was received to assist with the preparation of this manuscript.

\bmhead{Conflict of Interest}
The author declares no competing interests.

\bmhead{Data Availability}
No datasets were generated or analysed during the current study.

\bmhead{Materials Availability}
Not Applicable.

\bmhead{Ethics approval and consent to participate}
Not Applicable.

\bmhead{Consent for publication}
Not Applicable.

                           %%
%%===========================================================================================%%
\bibliography{ThermoNucle-Bibiliography}% common bib file
%% if required, the content of the .bbl file can be included here once bbl is generated
%%\input sn-article.bbl

\end{document}